\newtheorem{Thm}{Theorem}[section]
\newtheorem{Lem}{Lemma}[section]
\newtheorem{Ass}{Assumption}[section]
\newcommand{\beq}{\begin{align}}
\newcommand{\eeq}{\end{align}}
\newcommand{\Expect}[1]{\left\langle{#1}\right\rangle}
\newcommand{\no}{\nonumber}
\newcommand{\Natural}{\mathbb{N}}
\newcommand{\Real}{\mathbb{R}}
\newcommand{\e}{\mathrm{e}}
\newcommand{\q}{\quad}
\renewcommand{\H}{\mathcal{H}}
\newcommand{\F}{\mathcal{F}}
\DeclareMathOperator*{\op}{\oplus}
\DeclareMathOperator*{\ot}{\otimes}
\DeclareMathOperator*{\hot}{\hat{\otimes}}
\newcommand{\R}{\mathbb{R}}
\newcommand{\C}{\mathbb{C}}
\newcommand{\kk}{\mathbf{k}}
\newcommand{\xx}{\mathbf{x}}
\newcommand{\bb}{\mathrm{b}}
\newcommand{\ee}{\mathbf{e}}
\renewcommand{\rm}{\mathrm}
\renewcommand{\hat}{\widehat}
\begin{document}
\title{New criteria for self-adjointness and its application to Dirac-Maxwell Hamiltonian}
\maketitle
\begin{center}
Shinichiro Futakuchi and Kouta Usui
\vskip 2mm
\begin{small}
\textit{Department of Mathematics, Hokkaido University}\\
 \textit{060-0810, Sapporo, Japan.}

\end{small}
\end{center}
\abstract{
We present a new theorem concerning a sufficient condition for a symmetric operator acting in
a complex Hilbert space to be essentially self-adjoint.
By applying the theorem, we prove that the Dirac Maxwell Hamiltonian,
which describes a quantum system of a Dirac particle and a radiation field minimally interacting
with each other, is essentially
self-adjoint. Our theorem covers the case where the Dirac particle is in the Coulomb
type potential.}

\vskip 5mm
\noindent\textbf{keywords} : symmetric operator, essentially self-adjoint operator, Quantum Electrodynamics  \\
\noindent \textbf{Mathematical Subject Classifilations} : 81Q10, 47N50

\section{Introduction}
One of the most important mathematical studies of quantum systems
is to prove the self-adjointness of the Hamiltonian.
A self-adjoint Hamiltonian generates a unique time evolution operator,
while symmetric but not self-adjoint Hamiltonians may generate
no natural time evolution or may generate
a lot of different dynamics, because they 
have, in general, no self-adjoint extensions or infinitely many ones.  Moreover, the 
``probability interpretation" in quantum theory crucially depends upon the existence of
a spectral measure supported on the real line, which belongs \textit{only} to self-adjoint operators. 
In these viewpoints, proving the self-adjointness of a Hamiltonian is not
 just a problem on a mathematical 
technicality but also of \textit{physical} importance, and therefore developing general 
mathematical theorems for the
self-adjointness would contribute both to mathematics and to physics.

The Dirac-Maxwell model is expected to describe a quantum system
consisting of a Dirac particle and a radiation field with the minimal interaction.
Informal perturbation method shows that this model derives
the Klein-Nishina formula for the cross section
of the Compton scattering of an electron and a photon, which agrees with
the experimental results very well \cite{Nishijima1973a}. Hence, it is strongly suggested that the Dirac-Maxwell model describes a class of natural phenomena 
where the quantized radiation field plays an essential role.
The mathematically rigorous study of this model was initiated by Arai in Ref. \cite{MR1765584},
and several mathematical aspects of the model was analyzed so far (see, e.g.,
\cite{MR1981623}, \cite{MR2260374}, \cite{MR2810826}, \cite{MR2178588}, and \cite{MR2377946}).
The Hamiltonian of this model has a certain singularity coming from the fact that the free part 
Hamiltonian is not bounded below, which is quite special among Hamiltonians of realistic quantum systems.  
The essential self-adjointness of 
the Dirac-Maxwell Hamiltonian 
has been analyzed in Refs.  \cite{MR1765584} and \cite{MR2377946},
but, to our best knowledge, the proof of the essential self-adjointness
 in the case where the Dirac particle lies in the Coulomb
type potential, is still missing, although this is one of the most important situations in physics. 
The main goal of the present paper
is to give a proof of it.
  
The present paper can be regarded as a sequel to Ref. \cite{FutakuchiUsui2013}.
In Ref. \cite{FutakuchiUsui2013}, the authors developed a general theory on the existence of
solutions of initial value problems for the Schr\"odinge and Heisenberg equations generated by
a linear operator $H$ in some Hilbert space $\H$. One of the merits of the general theory 
established there is that it is applicable to the case where $H$ is not symmetric or even not normal,  
but it will also help us to attack mathematical 
problems of quantum field theories with a usual symmetric Hamiltonian. 
In this paper, we develop a new theorem for a symmetric operator to
be essentially self-adjoint, making a good use of materials 
obtained in Ref. \cite{FutakuchiUsui2013}. Assumptions we employ here seem to be
compatible with a large class of Hamiltonians of mathematical quantum field models with
the interaction being linear in bosonic field operators. In fact, the Dirac-Maxwell
Hamiltonian with the Coulomb type potential fulfills our assumptions and turns
 out to be essentially self-adjoint on
its natural domain, in spite of the fact that it is not semi-bounded.

\section{General Theorem on essential Self-adjointness}
In this section, we develop a general strategy for proving the essential self-adjointness,
by using the results obtained in Ref. \cite{FutakuchiUsui2013}.
Let $\mathcal{H}$ be a complex Hilbert space with the inner product $\Expect{\cdot,\cdot}$.
For a linear operator $ T $ in $ \H $, we denote, in general, its domain and range by $ D(T) $ and $ R(T) $, respectively. We also denote the adjoint of $T$ by $ T^* $ and the closure by $ \overline{T} $, if these exist. For a self-adjoint operator $ T $,  $ E_T (\cdot ) $ denotes the spectral measure of $ T $.

Let $H_0$ be a self-adjoint operator in $\mathcal{H}$, and $H_1$ be a symmetric one 
in $\mathcal{H}$. Suppose that $ D(H_0) \cap D(H_1) $ is dense in $ \H $. The main object we consider in the present section is the symmetric operator
\begin{align}\label{Hdef}
H:=H_0+H_1.
\end{align}
Assumptions we employ here are as follows \cite{FutakuchiUsui2013} :
\begin{Ass}\label{ass1} There exists an operator $ A $ in $ \H $ satisfying the following conditions:
\begin{enumerate}[(I)]
\item $A$ is self-adjoint and non-negative.
\item $A$ and $H_0$ are strongly commuting.
\item $H_1$ is $A^{1/2}$- bounded, where $ A^{1/2} $ defined through operational calculus.
\item There exists a constant $b>0$ such that, for all $L\ge 0$, $\xi \in R( E_A([0,L]))$
implies $H_1 \xi \in R( E_A([0,L+b]))$.
\end{enumerate}
\end{Ass}
\noindent For the notational simplicity, we put
\begin{align}
D := \bigcup _{L \ge 0} R(E_A ([0, L])) ,
\end{align}
and $D':=D(H_0)\cap D$.
Note that $D$ is a dense subspace in $\H$, because $A$ is self-adjoint by Assumption \ref{ass1} (I),
 and $D'$ is also dense if Assumption \ref{ass1} (I) and (II) are valid \cite{FutakuchiUsui2013}.

Our goal of the present section is to prove the following theorem :

\begin{Thm}\label{abs-main}
Suppose that Assumption \ref{ass1} holds.
\begin{enumerate}[(i)]
\item \label{main1} 
Exactly one of the following (a) or (b) holds.
\begin{enumerate}[(a)] 
\item $ H $ has no self-adjoint extension.
\item $ H $ is essentially self-adjoint.
\end{enumerate}
\item \label{main2} If $D'$ is a core of $H$,
then $H$ is essentially self-adjoint on $D'$.
\end{enumerate}
\end{Thm}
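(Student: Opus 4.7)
My plan is to exploit the Schr\"odinger dynamics constructed in Ref.~\cite{FutakuchiUsui2013}. Under Assumption~\ref{ass1}, that work supplies for each $\xi\in D'$ a family $\{U(t)\xi\}\subset D(H)$ solving the Cauchy problem
\begin{align*}
i\,\frac{d}{dt}U(t)\xi = H\,U(t)\xi,\qquad U(0)\xi = \xi,
\end{align*}
with $\|U(t)\xi\| = \|\xi\|$ owing to the symmetry of $H$.

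I would first prove part (\ref{main1}). Von Neumann's theory of deficiency indices reduces the dichotomy to the single claim that one of the indices $n_\pm(H) := \dim\ker(H^* \mp i)$ vanishes: once one does, either the other vanishes too (essential self-adjointness) or $n_+\neq n_-$ (no self-adjoint extension). To see $n_-(H) = 0$, take $\phi\in\ker(H^*+i)$, so $H^*\phi = -i\phi$, and set $f(t) := \Expect{\phi,U(t)\xi}$ for $\xi\in D'$ and $t\ge 0$. Differentiating and using $U(t)\xi\in D(H)$ to move $H$ onto $\phi$ through the inner product gives
\begin{align*}
f'(t) = -i\Expect{\phi,HU(t)\xi} = -i\Expect{H^*\phi,U(t)\xi} = -i\Expect{-i\phi,U(t)\xi} = f(t),
\end{align*}
so $f(t) = \e^{t}f(0)$. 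The bound $|f(t)|\le\|\phi\|\,\|\xi\|$ on $[0,\infty)$ coming from the isometry of $U(t)$ forces $f(0) = \Expect{\phi,\xi} = 0$; since $\xi\in D'$ is arbitrary and $D'$ is dense in $\H$, we conclude $\phi = 0$ and $n_-(H) = 0$.

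For part (\ref{main2}), assuming $D'$ is a core of $H$, my task is to exhibit a self-adjoint extension of $H$, since combined with (\ref{main1}) this forces essential self-adjointness. Applying \cite{FutakuchiUsui2013} to both $H$ and $-H$---both satisfy Assumption~\ref{ass1} with the same $A$---extends $U(t)$ to $t\in\R$, and uniqueness of symmetric Schr\"odinger solutions together with the core hypothesis lets one glue the $D'$-defined isometries into a strongly continuous unitary one-parameter group on $\H$. Stone's theorem then produces a self-adjoint $K$ with $U(t)=\e^{-itK}$, and differentiating at $t=0$ yields $K\xi = H\xi$ for $\xi\in D'$. Closedness of $K$ together with the core property promote this to $K\supset\overline{H}$, so $K$ is a self-adjoint extension of $H$, and (\ref{main1}) rules out case (a).

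The main obstacle I anticipate is analytic rather than algebraic: extracting the correct regularity statements from \cite{FutakuchiUsui2013}. For (\ref{main1}) I need enough smoothness of $t\mapsto U(t)\xi$ to justify differentiating $f$ and performing the adjoint swap, which in particular requires $U(t)\xi\in D(H)$ for $t\ge 0$. For (\ref{main2}) I need the $D'$-defined isometric evolution built from $H$ and $-H$ to actually assemble into a unitary group on all of $\H$; this is precisely where the core hypothesis on $D'$ enters, allowing one to extend and close up the isometries consistently. Once these analytic inputs are secured, what remains is just an ODE blow-up argument and one invocation of Stone's theorem.
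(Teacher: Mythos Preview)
Your plan is sound but follows a different route from the paper in both parts, and in one respect actually yields more than the paper states.

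For (i), the paper argues via \emph{uniqueness} of self-adjoint extensions: using the dynamics $\xi(t)=e^{-itH_0}\overline{U(t,0)}\xi$ from Lemma~\ref{time-order}, one shows that any self-adjoint extension $\hat T$ of $H$ must satisfy $\xi(t)=e^{-it\hat T}\xi$ on the dense set $D'$, hence any two such extensions coincide; the standard fact that a symmetric operator with a unique self-adjoint extension is essentially self-adjoint then gives the dichotomy. Your deficiency-index computation is equally valid and arguably more elementary, since it avoids that last abstract lemma. Note, however, that Lemma~\ref{time-order} already supplies the dynamics for \emph{all} $t\in\R$ (no separate appeal to $-H$ is needed), so your blow-up argument runs just as well for $t\to-\infty$ and kills $\ker(H^*-i)$ too. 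Thus your method in fact shows $n_+(H)=n_-(H)=0$ outright, i.e.\ case (b) always holds under Assumption~\ref{ass1}; the dichotomy as stated is never resolved by case (a).

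For (ii), the paper takes a shorter path: it restricts $H$ to $D'$, notes that the core hypothesis gives $\overline{H|_{D'}}=\overline{H}$, and then invokes a Stone-type lemma (if the Schr\"odinger equation for $\overline{T}$ is solvable from every $\xi\in D(T)$, then $\overline{T}$ is self-adjoint) directly to $T=H|_{D'}$. Your plan---assemble a unitary group, apply Stone's theorem to produce a self-adjoint generator $K\supset H$, then invoke (i)---also works, but the step ``glue the isometries into a one-parameter group'' is where the real labor hides: you must verify $W(t+s)=W(t)W(s)$ for $W(t):=e^{-itH_0}\overline{U(t,0)}$, which amounts to time-translation invariance of the interaction-picture Dyson series and is not entirely free. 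The core hypothesis in your argument enters only at the very end (passing from $K\supset H|_{D'}$ to $K\supset\overline{H}$), not in the gluing. Given your own strengthened (i), the cleanest finish is simply: $H$ is essentially self-adjoint, and $D'$ is a core, so $\overline{H|_{D'}}=\overline{H}$ is self-adjoint.
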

To prove Theorem \ref{abs-main}, we first recall the result obtained in Ref. \cite{FutakuchiUsui2013}.
Define for $t\in \R$,
\[ H_1(t):=e^{itH_0}H_1e^{-itH_0} .\]
All we need here are summarized as the following lemma: 
\begin{Lem}\label{time-order} 
Suppose that Assumption \ref{ass1} holds. Then, for each $ t,t' \in \R $ and $ \xi \in D $, the series
\begin{align}
U(t,t') \xi := \xi + (-i) \int _{t'} ^t d\tau _1 \, H_1(\tau _1) \xi + (-i)^2 \int _{t'} ^t d\tau _1 \int _{t'} ^{\tau _1} d\tau _2 \, H_1 (\tau _1) H_1 (\tau _2) \xi + \cdots 
\end{align}
converges absolutely, where each of integrals is taken in the sense of strong integral. 
Furthermore, the following (i) and (ii) hold. 

\begin{enumerate}[(i)]
\item The operator $ U(t,t') $ has a unitary extension $ \overline{U(t,t')} $ uniquely.

\item For each $ \xi \in D' $, put $\xi (t) := e^{-itH_0}\overline{U(t,0)}\xi $. Then, $\xi(t)\in D(H)$ for all $t\in \R$,
and the vector valued function $t\mapsto \xi(t)$ is strongly differentiable. Moreover, 
it is a solution of the initial value problem for the Schr\"{o}dinger equation:
\begin{align}\label{sch}
\frac{d}{dt} \xi (t) = -iH \xi (t) , \q \xi (0) = \xi ,
\end{align}
where the derivative in $t$ is taken in the strong sense (this applies in what follows unless otherwise stated).
\end{enumerate}
\end{Lem}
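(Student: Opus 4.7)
The plan is to use the unitary propagator $\overline{U(t,0)}$ from Lemma \ref{time-order} to build a strongly continuous one-parameter unitary group on $\H$, extract its Stone generator, and then compare this generator with any hypothetical self-adjoint extension of $H$. Concretely, set $V(t):=e^{-itH_0}\overline{U(t,0)}$ for $t\in\R$. Unitarity is immediate from Lemma \ref{time-order} (i), and $V(0)=I$ because the Dyson series reduces to its leading term at $t=t'=0$. Strong continuity on the dense subspace $D'$ follows from the strong differentiability in Lemma \ref{time-order} (ii), and extends to all of $\H$ by uniform boundedness of $\|V(t)\|$. The group law $V(t+s)=V(t)V(s)$ I would establish by proving, at the level of the Dyson series on $D$, the cocycle identity $\overline{U(t+s,s)}\,\overline{U(s,0)}=\overline{U(t+s,0)}$, using the intertwining $e^{isH_0}H_1(\tau)e^{-isH_0}=H_1(\tau+s)$ and a shift of integration variables term-by-term, and then combining with $e^{-i(t+s)H_0}=e^{-itH_0}e^{-isH_0}$. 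Stone's theorem then furnishes a unique self-adjoint $\tilde H$ with $V(t)=e^{-it\tilde H}$, and the Schr\"odinger equation in Lemma \ref{time-order} (ii) yields $\tilde H\xi=H\xi$ for every $\xi\in D'$, i.e.\ $\tilde H\supset H|_{D'}$.

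For part (\ref{main1}), suppose $H$ admits a self-adjoint extension $K$. For $\xi\in D'$, both $V(t)\xi$ and $e^{-itK}\xi$ lie in $D(K)$ (the former because Lemma \ref{time-order} (ii) places it in $D(H)\subset D(K)$) and both solve the initial value problem $\frac{d}{dt}\eta(t)=-iK\eta(t)$, $\eta(0)=\xi$; conservation of $\|\eta_1(t)-\eta_2(t)\|^2$ forces $V(t)\xi=e^{-itK}\xi$, and density of $D'$ yields $V(t)=e^{-itK}$ on $\H$. Uniqueness of the Stone generator then gives $K=\tilde H$, so $H$ has at most one self-adjoint extension; by the standard deficiency-index analysis, this uniqueness is equivalent to essential self-adjointness whenever any extension exists. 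The two alternatives (a) and (b) are also mutually exclusive, since essential self-adjointness immediately produces the extension $\overline{H}$; hence exactly one of (a) or (b) holds.

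For part (\ref{main2}), suppose $D'$ is a core of $H$, so $\overline{H|_{D'}}=\overline{H}$. Since $\tilde H$ is closed and $\tilde H \supset H|_{D'}$, we obtain $\tilde H \supset \overline{H|_{D'}}=\overline{H} \supset H$, exhibiting a self-adjoint extension of $H$; part (\ref{main1}) then forces $H$ to be essentially self-adjoint, and since $D'$ is a core this is the same as essential self-adjointness on $D'$. The main technical obstacle in the whole argument is the cocycle identity for $U(t,s)$: it requires careful manipulation of the iterated strong integrals and a density argument to pass from the term-by-term identity on $D$ to the closures on $\H$. Once this is in hand, the remainder is a direct application of Stone's theorem and of the uniqueness of Schr\"odinger evolutions generated by a self-adjoint operator.
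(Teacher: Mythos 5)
Your proposal does not prove the statement in question. The statement is Lemma \ref{time-order} itself: the absolute convergence of the Dyson series $U(t,t')\xi$ for $\xi\in D$, the existence of a unique unitary extension $\overline{U(t,t')}$, and the fact that $\xi(t)=e^{-itH_0}\overline{U(t,0)}\xi$ solves the Schr\"odinger equation. Your argument opens with ``use the unitary propagator $\overline{U(t,0)}$ from Lemma \ref{time-order}'' and then invokes parts (i) and (ii) of that lemma as given; what you actually prove is Theorem \ref{abs-main} (your references to parts (\ref{main1}) and (\ref{main2}) are the two clauses of that theorem, not of the lemma). In other words, you have assumed the statement you were asked to establish and derived its consequences instead. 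For the record, the paper itself does not reprove the lemma either --- it imports it wholesale from Ref. \cite{FutakuchiUsui2013} --- but that does not make your text a proof of it.

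A proof of the lemma would have to engage directly with Assumption \ref{ass1}. The essential mechanism is: by (II) the group $e^{itH_0}$ commutes with $E_A$, and by (IV) each application of $H_1$ (hence of $H_1(\tau)$) maps $R(E_A([0,L]))$ into $R(E_A([0,L+b]))$, so the $n$-fold product $H_1(\tau_1)\cdots H_1(\tau_n)\xi$ stays in $R(E_A([0,L+nb]))$ for $\xi\in R(E_A([0,L]))$. Combining this with the relative bound (III), $\|H_1\eta\|\le a\|A^{1/2}\eta\|+c\|\eta\|\le(a\sqrt{L+nb}+c)\|\eta\|$ on that range, one bounds the $n$-th Dyson term by $\frac{|t-t'|^n}{n!}\prod_{k=1}^{n}\bigl(a\sqrt{L+kb}+c\bigr)\|\xi\|$, and the product grows only like $(n!)^{1/2}$ times a geometric factor, so the series converges absolutely. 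Isometry of $U(t,t')$ on $D$ together with density of its range then yields the unique unitary extension, and term-by-term differentiation of the series gives the Schr\"odinger equation. None of these steps appear in your proposal. (Separately, the material you did write is essentially the paper's proof of Theorem \ref{abs-main}, with the cosmetic difference that the paper routes the uniqueness-of-extension argument through Lemmas \ref{a-or-b} and \ref{stone} rather than through an explicitly constructed Stone generator $\tilde H$; but that is an answer to a different question.)
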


To go further, we need a little bit more lemmas.

The following fact is well known (see, e.g., Ref. \cite{MR2953553}):
\begin{Lem}\label{USAE} Let $ T $ be a symmetric operator in $ \H $. If $ T $ has a unique self-adjoint extension, then $ T $ is essentially self-adjoint.
\end{Lem}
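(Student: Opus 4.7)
The plan is to reduce to von Neumann's theory of deficiency indices for closed symmetric operators and then show, via a trivial phase-rotation argument, that uniqueness of a self-adjoint extension forces both deficiency subspaces of $\overline{T}$ to be trivial.

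First, I would observe that every self-adjoint extension of $T$ is automatically an extension of $\overline{T}$, and conversely every self-adjoint extension of $\overline{T}$ extends $T$; in particular $T$ and $\overline{T}$ share exactly the same family of self-adjoint extensions, so the uniqueness hypothesis transfers to $\overline{T}$. Since $T$ being essentially self-adjoint means precisely that $\overline{T}$ is self-adjoint, it suffices to work with the closed symmetric operator $\overline{T}$.

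Next, I would invoke von Neumann's parametrization: a closed symmetric operator $S$ admits a self-adjoint extension if and only if its deficiency subspaces $\mathcal{K}_\pm(S) := \ker(S^* \mp i)$ have equal dimension $d$, and when this is the case the set of all self-adjoint extensions is in one-to-one correspondence, via the Cayley transform, with the unitary operators $\mathcal{K}_+(S) \to \mathcal{K}_-(S)$. Applying this to $S = \overline{T}$, the uniqueness assumption forces the set of such unitaries to contain a single element. If the common deficiency dimension $d$ were $\geq 1$, then starting from any one unitary $U$ the family $\{e^{i\theta}U : \theta \in \R\}$ would already yield a continuum of pairwise distinct unitaries, a contradiction. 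Hence $d = 0$, i.e., $\ker(\overline{T}^* \mp i) = \{0\}$, which is one of the standard equivalent characterizations of essential self-adjointness.

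There is no real obstacle here; the argument is entirely formal once von Neumann's classification is accepted, which is precisely why the authors merely cite the result as well-known. The only point demanding a little care is the reduction to the closed operator $\overline{T}$, since von Neumann's theorem is stated for closed symmetric operators, but this is a routine observation about the coincidence of the extension families of $T$ and $\overline{T}$.
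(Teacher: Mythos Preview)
Your argument is correct and is the standard textbook proof via von Neumann's deficiency-index classification. The paper itself does not give a proof of this lemma at all: it simply records the statement as well known and cites a reference, so there is nothing to compare beyond noting that your deficiency-index argument is precisely the kind of proof one finds in the cited source.
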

\begin{Lem}\label{a-or-b}
 Let $ T $ be a symmetric operator in $ \H $. If there exists a dense subspace $ V $ such that for any $ \xi \in V $ the initial value problem 
\begin{align}
\frac{d}{dt} \xi (t) = -i T \xi (t) , \q \xi (0 ) = \xi ,
\end{align}
has a solution $ \R \ni t \mapsto \xi (t ) \in D(T) $, then, 
exactly one of the following (a) or (b) holds.
\begin{enumerate}[(a)] 
\item $ T $ has no self-adjoint extension.

\item $ T $ is essentially self-adjoint.
\end{enumerate}
\end{Lem}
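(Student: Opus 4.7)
The plan is to reduce the statement to Lemma \ref{USAE} by showing that, \emph{whenever} $T$ admits at least one self-adjoint extension, that extension must be unique. So I would begin by supposing that (a) fails, i.e.\ that there exists some self-adjoint extension $S$ of $T$, and my task becomes to show that any two self-adjoint extensions $S_1,S_2$ of $T$ coincide.

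The key technical step is a standard energy-type computation. Fix $\xi \in V$ and let $t\mapsto\xi(t)$ be the given solution of $\tfrac{d}{dt}\xi(t)=-iT\xi(t)$, $\xi(0)=\xi$, with $\xi(t)\in D(T)$. Let $S$ be any self-adjoint extension of $T$. Since $V\subset D(T)\subset D(S)$, the vector $e^{-itS}\xi$ is strongly differentiable in $t$ with derivative $-iSe^{-itS}\xi$, and it lies in $D(S)$ for every $t$. Define
\begin{align}
\eta(t) := \xi(t) - e^{-itS}\xi,
\end{align}
so that $\eta(t)\in D(S)$, $\eta(0)=0$, and $\tfrac{d}{dt}\eta(t)=-iT\xi(t)+iSe^{-itS}\xi$. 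Because $S\supset T$, we have $T\xi(t)=S\xi(t)$, and hence
\begin{align}
\frac{d}{dt}\|\eta(t)\|^2
= 2\,\re\Expect{\eta(t),-iS\eta(t)}
= 2\,\re\bigl(-i\Expect{\eta(t),S\eta(t)}\bigr) = 0,
\end{align}
where we used that $\Expect{\phi,S\phi}\in\R$ for $\phi\in D(S)$ by self-adjointness of $S$. Therefore $\|\eta(t)\|^2 \equiv \|\eta(0)\|^2 = 0$, giving $\xi(t)=e^{-itS}\xi$ for all $t\in\R$ and all $\xi\in V$.

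The remainder is soft. The identity $e^{-itS}\xi=\xi(t)$ determines $e^{-itS}$ on the dense subspace $V$ independently of the chosen extension $S$; since $e^{-itS}$ is bounded, it is therefore determined on all of $\H$. Applying this to two self-adjoint extensions $S_1,S_2$ yields $e^{-itS_1}=e^{-itS_2}$ for every $t\in\R$, and Stone's theorem gives $S_1=S_2$. Hence $T$ has a unique self-adjoint extension, so by Lemma \ref{USAE} $T$ is essentially self-adjoint, i.e.\ (b) holds. Combined with the opening dichotomy this proves that exactly one of (a), (b) occurs.

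The main potential obstacle is purely bookkeeping: one must make sure that every vector appearing in the energy computation actually lies in $D(S)$ so that $\Expect{\eta(t),S\eta(t)}$ is well-defined and real. This is handled by the containment $D(T)\subset D(S)$ together with the invariance $e^{-itS}D(S)\subset D(S)$; no nontrivial analysis beyond Stone's theorem is needed.
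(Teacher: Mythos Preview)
Your proof is correct and follows essentially the same strategy as the paper: assume a self-adjoint extension $S$ exists, show $\xi(t)=e^{-itS}\xi$ for all $\xi\in V$, deduce uniqueness of the extension via Stone's theorem, and invoke Lemma~\ref{USAE}. The only cosmetic difference is that the paper establishes $\xi(t)=e^{-it\hat T}\xi$ by pairing with a fixed test vector $\eta\in D(\hat T)$ and differentiating $\langle\eta,e^{it\hat T}\xi(t)\rangle$, whereas you differentiate the energy $\|\xi(t)-e^{-itS}\xi\|^2$ directly; both computations are standard and rely on the same ingredients ($\xi(t)\in D(T)\subset D(S)$ and strong differentiability of $e^{-itS}$ on $D(S)$).
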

\begin{proof} It is sufficient to prove that, if there exists a self-adjoint extension, then $ T $ is essentially self-adjoint. Suppose that $ T $ has a self-adjoint extension $ \hat{T} $. Then, for each $ \eta \in D(\hat{T}) $ and $ \xi \in V $, we have 
\begin{align}
\frac{d}{dt} \left\langle \eta , e^{it\hat{T}} \xi (t) \right\rangle & = \left\langle -i\hat{T} e^{-it\hat{T}} \eta , \xi (t) \right\rangle + \left\langle e^{-it\hat{T}} \eta , -iT \xi (t) \right\rangle .\label{diff1}
\end{align}
Since $\xi(t)$ belongs to $D(T)$, the first term on the right hand side of \eqref{diff1} is equal to $\left\langle -ie^{-it\hat{T}} \eta ,T \xi (t) \right\rangle$. Hence
\begin{align}
\frac{d}{dt} \left\langle \eta , e^{it\hat{T}} \xi (t) \right\rangle =0
\end{align}
for all $t\in \R$.
Thus, we have
\begin{align}
\left\langle \eta , e^{it\hat{T}} \xi (t) \right\rangle = \left\langle \eta, \xi(0) \right\rangle =\left\langle \eta , \xi \right\rangle , \q t \in \R .
\end{align}
Since $ \eta \in D(\hat{T}) $ is arbitrary, $ \xi (t) = e^{-it\hat{T}} \xi \, ( t\in \R ) $ for all $ \xi \in V $. This implies that, if $ T $ has another self-adjoint extension $ \hat{T} ' $, then $ e^{-it\hat{T}} = e^{-it\hat{T}'} \, (t \in \R ) $. Hence, $ \hat{T} = \hat{T} ' $ by Stone's theorem. This means that the self-adjoint extension of $ T $ is unique. Thus, $ T $ is essentially self-adjoint by Lemma \ref{USAE}. 
\end{proof}

The next lemma is related to Stone's theorem and the proof can be found in e.g., Ref. \cite [p. 267]{MR751959}. 
\begin{Lem}\label{stone}
 Let $ T $ be a symmetric operator in the Hilbert space $ \H $. If for any $ \xi \in D(T) $ the initial value problem 
\begin{align}
\frac{d}{dt} \xi (t) = -i \overline{T} \xi (t) , \q \xi (0 ) = \xi ,
\end{align}
has a solution $ \R \ni t \mapsto \xi (t ) \in D(\overline{T}) $, then $ \overline{T} $ is self-adjoint.
\end{Lem}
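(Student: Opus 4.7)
The plan is to apply the standard kernel criterion for self-adjointness: since $\overline{T}$ is closed and symmetric (as the closure of a symmetric operator), it suffices to show $\ker(T^*\mp i)=\{0\}$, using $(\overline{T})^{*}=T^{*}$. The solutions of the Schr\"odinger equation furnished by the hypothesis will provide enough test vectors to force any alleged deficiency vector to vanish.

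My first step would be to establish norm conservation for the provided solutions. Differentiating $\|\xi(t)\|^{2}$ strongly should yield
\begin{align*}
\frac{d}{dt}\|\xi(t)\|^{2}=2\,\mathrm{Re}\,\langle \xi(t),-i\overline{T}\xi(t)\rangle=2\,\mathrm{Im}\,\langle \xi(t),\overline{T}\xi(t)\rangle=0,
\end{align*}
because $\overline{T}$ is symmetric, so $\langle \xi(t),\overline{T}\xi(t)\rangle\in\R$; hence $\|\xi(t)\|=\|\xi\|$ for every $t\in\R$.

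The core step is then the following. Suppose $\eta\in D(T^{*})$ with $T^{*}\eta=i\eta$; fix $\xi\in D(T)$ and the associated solution $\xi(t)$. Using $\xi(t)\in D(\overline{T})$ and the adjoint relation $\langle \eta,\overline{T}\xi(t)\rangle=\langle T^{*}\eta,\xi(t)\rangle$, one computes
\begin{align*}
\frac{d}{dt}\langle \eta,\xi(t)\rangle=\langle \eta,-i\overline{T}\xi(t)\rangle=-i\langle i\eta,\xi(t)\rangle=-\langle \eta,\xi(t)\rangle,
\end{align*}
so $\langle \eta,\xi(t)\rangle=e^{-t}\langle \eta,\xi\rangle$. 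Combining this with the bound $|\langle \eta,\xi(t)\rangle|\le\|\eta\|\|\xi\|$ from norm conservation and sending $t\to-\infty$ forces $\langle \eta,\xi\rangle=0$. Density of $D(T)$ yields $\eta=0$. The case $T^{*}\eta=-i\eta$ is symmetric: one gets $\langle \eta,\xi(t)\rangle=e^{t}\langle \eta,\xi\rangle$ and sends $t\to+\infty$.

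The only delicate point I expect to pay attention to is the justification of the scalar differentiation of $t\mapsto\langle \eta,\xi(t)\rangle$; this reduces to continuity of the inner product against the strongly differentiable curve $\xi(t)$ together with the legitimacy of shifting $\overline{T}$ onto $T^{*}$, and both are immediate from the hypotheses $\xi(t)\in D(\overline{T})$ and $\eta\in D(T^{*})$. Once the exponential identities for $\langle \eta,\xi(t)\rangle$ are in hand, the conclusion $\ker(T^{*}\mp i)=\{0\}$, and hence the self-adjointness of $\overline{T}$, follows immediately.
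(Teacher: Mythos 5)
Your proof is correct. The paper does not prove this lemma itself but refers to the literature (Reed--Simon, Vol.~II), and your argument --- norm conservation of the solutions plus the exponential identity $\langle\eta,\xi(t)\rangle=e^{\mp t}\langle\eta,\xi\rangle$ forcing the deficiency spaces $\ker(T^{*}\mp i)$ to vanish --- is precisely the standard proof given there.
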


\begin{proof}[Proof of Therem \ref{abs-main}]
We first prove (i).
From Lemma \ref{time-order}, we find that for all $\xi \in D'$, there is a solution $\xi(t)=e^{-itH_0}\overline{U(t,0)}\xi\in D(H)$ 
of the initial value problem 
generated by $H$,
\begin{align}
\frac{d}{dt} \xi (t) = -i H\xi (t) , \q \xi (0 ) = \xi .
\end{align}
 Since $D'$ is dense, the assertion follows from Lemma \ref{a-or-b}.

Next, we prove (ii).
Let $H'$ be a restriction of $H$ to $D'$. Then $H'$ is a symmetric operator,
and $\overline{H'}=\overline{H}$, because $D'$ is a core of $H$. From Lemma \ref{time-order},
we conclude that for all $\xi \in D'=D(H')$ there is a solution $\xi(t)=e^{-itH_0}\overline{U(t,0)}\xi\in D(H)$ of the initial value problem 
generated by $H$,
\begin{align}
\frac{d}{dt} \xi (t) = -i\overline{H}\xi (t) = -i \overline{H'}\xi (t) , \q \xi (0 ) = \xi 
\end{align}
 From Lemma \ref{stone}, it follows that
$\overline{H'}$ is self-adjoint, but this immediately implies that $H$ is essentially self-adjoint, because
 $\overline{H'}=\overline{H}$.
\end{proof}

\section{Dirac-Maxwell Hamiltonian}
In this section, we introduce the Dirac-Maxwell Hamiltonian and prove its essential
self-adjointness under a suitable condition. 
The Dirac-Maxwell Hamiltonian describes 
a quantum system consisting of a Dirac particle under a potential $V$ and a radiation field minimally interacting
with each other.
We will use the unit system in which the speed of light and $ \hbar $, the Planck constant
devided by $2\pi$, are set to be unity.

Firstly, we consider the Dirac particle sector. Let us denote the mass and the charge of the Dirac particle by $ M>0 $ and $ q \in \Real $, respectively.   The Hilbert space of state vectors for the Dirac particle is taken to be 
\begin{align}
\H _\mathrm{D} := L^2 (\Real_\xx ^3; \C ^4),
\end{align}
the $\C^4$-valued square integrable functions on $\Real_\xx^3 = \{ \xx = (x^1 , x^2 , x^3) \, | \, x^j \in \R , \, j=1,2,3 \} $. 
The vector space $\R^3_\xx$ here represents
the position space of the Dirac particle. We sometimes omit the subscript $\xx$ and just write $\R^3$ instead of $\R^3_\xx$
when no confusion may occur.
The target space $\C^4$ realizes a representation of the four dimensional 
Crifford algebra accompanied by the four dimensional Minkowski vector space. The generators 
$\{\gamma^\mu\}_{\mu=0,1,2,3}$ 
satisfy the anti-commutation relations
\begin{align}
\{\gamma^\mu, \gamma^\nu\}:=\gamma^\mu\gamma^\nu+\gamma^\nu\gamma^\mu=2g^{\mu\nu},\q \mu,\nu=0,1,2,3,
\end{align}
where the Minkowski metric tensor $(g_{\mu\nu})$ is given by
\begin{align}
(g_{\mu\nu})= \left(\begin{matrix}
1 & 0& 0& 0 \\
0 & -1 & 0& 0 \\
0& 0& -1 &0 \\
0&0&0& -1
\end{matrix}\right),
\end{align}
and $g^{\mu\nu}$ denotes the $\mu\nu$-component of the inverse matrix of the above $(g_{\mu\nu})$
(numerically the same as $g_{\mu\nu}$). 
We assume $\gamma^0$ to be Hermitian and $\gamma^j$'s ($j=1,2,3$) be anti-Hermitian.
We use the notations following Dirac :
\begin{align}
\beta&:=\gamma^0 ,\\
\alpha^j &:= \gamma^0 \gamma^j,\q j=1,2,3.
\end{align}
Then, $\alpha^j$'s and $\beta$ satisfy the anti-commutation relations
\begin{align}
& \{ \alpha ^i , \alpha ^j \} = 2 \delta ^{ij} , \q i,j =1,2,3, \\
& \{ \alpha ^j , \beta \} =0 , \q \beta ^2 =1, \q j=1,2,3,
\end{align}
where $ \delta ^{ij} $ is the Kronecker delta.
The momentum operator of the Dirac particle is given by
\begin{align}
\mathbf{p} := (p_1 , p_2 , p_3) := (-i D_1 , -iD_2 , -iD_3)
\end{align}
with $ D_j $ being the generalized partial differential operator on $ L^2 (\Real ^3; \C ^4) $ 
with respect to the variable $ x^j $, the $ j $-th component of $ \mathbf{x} = (x^1 , x^2 , x^3) \in \R ^3 $. 
We write in short
\[  { \boldsymbol \alpha } \cdot \mathbf{p} := \sum _{j=1} ^3 \alpha ^j p_j .\]
The potential is represented by a $ 4\times 4 $ Hermitian matrix-valued function $ V $ on $ \R ^3_\xx$ with each matrix components 
being Borel measurable. Note that the function $V$ naturally defines a multiplication operator acting in $\H_\rm{D}$. We denote it by the same symbol $V$.
The Hamiltonian of the Dirac particle under the influence of this potential $V$ is then given by the Dirac operator
\begin{align}
H_\mathrm{D} (V) := { \boldsymbol \alpha } \cdot \mathbf{p} +M \beta +V 
\end{align}
acting in $\H_{\mathrm{D}}$,
with the domain $ D(H_\mathrm{D} (V)) := H^1 (\Real ^3 ; \C ^4) \cap D(V) $, where $ H^1 (\Real ^3) $
denotes the $ \C ^4 $ valued Sobolev space of order one. 
Let $C$ be the conjugation operator in $\mathcal{H}_\rm{D}$ defined by
\[ (Cf)(\xx)=f(\xx)^* ,\q f\in \H_\rm{D},\q \xx\in \R^3,\]
where $*$ means the usual complex conjugation.
By Pauli's lemma \cite{MR1219537}, there is a $4\times 4$ matrix $U$ satisfying
\begin{align}
U^2&=1, \q UC=CU, \\
U^{-1}\alpha^j U &= \overline{\alpha^j}, \q j=1,2,3, \q U^{-1}\beta U = -\overline{\beta},
\end{align}
where for a matrix $A$, $\overline{A}$ denotes its complex-conjugated matrix
and $1$ the identity matrix.
We assume that the potential $V$ satisfies the following conditions :
\begin{Ass}\label{ass-V}
\begin{enumerate}[(I)]
\item Each matrix component of $V$ belongs to
\[ L^2_\rm{loc}(\R^3):=\left\{f:\R^3\to \C\, \Bigg|\, \text{Borel measurable and } \int_{|\xx|\le R}|f(\xx)|^2<\infty \,\text{ for all $R>0$.} \right\}.\]
\item $V$ is Charge-Parity (CP) invariant in the following sense:
\begin{align}
U^{-1}V(\xx)U = V (-\xx)^*, \q \rm{a.e.} \,x\in\R^3.
\end{align}
\item $H_\rm{D}(V)$ is essentially self-adjoint.
\end{enumerate}
\end{Ass}
\noindent Hereafter, we denote the closure of $H_\rm{D}(V)$, which is self-adjoint by Assumption
\ref{ass-V}, by the same symbol.
The important remark is that the Coulomb type potential 
\begin{align}\label{Cou}
V(\xx)=-\frac{Zq^2}{|\xx|} 
\end{align}
satisfies Assumption \ref{ass-V} provided that $Zq^2 <1/2$, or more concretely,
$Z\le 68$ if we put $q=e$, the elementary charge \cite{MR1219537}.
 
Secondly, we introduce the free radiation field Hamiltonian in the Coulomb gauge.
We adopt as the one-photon Hilbert space
\begin{align}
\H _\mathrm{ph} := L^2 (\R ^3 _\kk ; \C ^2) .
\end{align}
The above $ \R ^3 _\kk := \{ \mathbf{k} = (k^1,k^2,k^3) \, | \, k^j \in \R , \, j=1,2,3 \} $ represents the momentum space of photons, and the the target space $\C^2$ represents the
degrees of freedom coming from the polarization of photons. 
We often omit the subscript $ \kk $ in $ \R ^3 _\kk $, and just denote it by $\R^3$, when there is no danger of confusion.
The Hilbert space for the quantized radiation field in the Coulomb gauge is given by 
\begin{align}
\F _\mathrm{ph} :=  \op _{n=0} ^\infty \ot _\rm{s} ^n \H _\rm{ph} = \Big\{ \Psi = \{ \Psi ^{(n)} \} _{n=0} ^\infty \, \Big| \, \Psi ^{(n)} \in \ot _\rm{s} ^n \H _\rm{ph} , \, \, ||\Psi ||^2:=\sum _{n=0} ^\infty \| \Psi ^{(n)} \| ^2 <\infty \Big\} ,
\end{align}
the Boson Fock space over $ \H _\mathrm{ph} $, where $ \ot _\rm{s} ^n $ denotes the $ n $-fold symmetric tensor product with the convention $ \ot _\rm{s} ^0 \H _\rm{ph} := \C $.
Let $ \omega (\kk ) := |\kk | , \, \kk \in \R ^3 $, the energy of a photon with momentum $\kk\in
\Real^3$. The multiplication operator by the $2\times 2$ matrix-valued function 
\begin{align}
\kk\mapsto
\begin{pmatrix}
 \omega (\kk) & 0 \\
 0 &\omega(\kk)
\end{pmatrix}
 \end{align}
acting in $\H_\rm{ph}$ is self-adjoint, and we also denote it by the same symbol $ \omega $. This operator 
$\omega$ is a one-photon Hamiltonian
in $\mathcal{H}_{\rm{ph}}$, and 
the free Hamiltonian (kinetic term) of the quantum radiation field is given by its second quantization
\begin{align}
H_\mathrm{rad} := \mathrm{d}\Gamma _\bb (\omega ) :=\op_{n=0}^\infty \overline{ \Big( \sum _{j=1} ^n I \otimes \dots \otimes  I \otimes \stackrel{j\text{-th}}{\omega} \otimes I \otimes \dots \otimes I \Big) \upharpoonright \hot ^n D(\omega ) } .
\end{align}
The operator $ H_\mathrm{rad} $ is self-adjoint.

Thirdly, we will introduce the point-like quantum radiation field $\mathbf{A}(\xx)$ at $\xx\in \R^3$ with
 an ultraviolet (UV) cut-off in the Coulomb gauge. This is defined
in terms of photon polarization vectors $\{\mathbf{e} _{(r)}\}_{r=1,2}$ and
an ultraviolet cut-off function $\chi\in L^2(\R^3_\xx)$ as follows.
Photon polarization vectors are 
$ \R_\kk ^3 $-valued measurable functions $ \mathbf{e} _{(r)} =(e_{(r)}^1,e_{(r)}^2,e_{(r)}^3)$ ($r=1,2$) on $\R^3_\kk$ 
such that, for all $ \kk \in M_0  := \R ^3 \backslash \{ (0,0, k^3) \, | \, k^3 \in \R \} $ , 
\begin{align}
\ee _{(r)} (\kk ) \cdot \ee _{(r')} (\kk ) = \delta _{rr'} , \q \ee _{(r)} (\kk ) \cdot \kk =0 , \q r,r' =1,2,
\end{align}
where the above $\cdot$ means the usual Euclidean inner product defined by
\[ \ee _{(r)} (\kk ) \cdot \ee _{(r')} (\kk )=\sum_{j=1}^3\e_{(r)}^j (\kk )  e_{(r')}^j (\kk ) ,\q \ee _{(r)} (\kk ) \cdot \kk = \sum_{j=1}^3\e_{(r)}^j (\kk )  k^j,
\q \kk = (k^1,k^2,k^3).\]
 Note that such vector valued functions can be chosen so that they are continuous
 on $M_0$.
An ultraviolet cut-off function $\chi\in L^2(\R^3_\xx)$ is a real valued function
on $ \R ^3_\xx $ satisfying
\begin{align}
\frac{\hat{\chi}}{\sqrt{\omega}} \in L^2 (\R_\kk ^3) ,
\end{align}
where $ \hat{\chi } $ denotes the Fourier transform of $ \chi $. 
Let us denote by $ a(F ) \, (F \in \H _\mathrm{ph})  $ the annihilation operator on $\mathcal{F}_\rm{ph}$, and $\phi(F)$ by
the Segal field operator
\begin{align}
\phi (F) := \frac{\overline{a(F) + a(F) ^*}}{\sqrt{2}}.
\end{align}
It is well known that $\phi(F)$ is self-adjoint. For each $ f \in L^2 (\R ^3) $, we define 
\begin{align}
a^{(1)} (f) := a(f,0) , \q a^{(2)} (f) := a(0,f) .
\end{align}
Then, the point-like quantized radiation field $ \mathbf{A} (\xx ) := (A^1 (\xx ) , A^2 (\xx ) , A^3 (\xx )) $ with the UV cut-off $ \chi $ is given by 
\begin{align}
A^j (\xx ) := \phi (g_\xx^j ) , \q j=1,2,3,
\end{align}
with
\begin{align}
g^j _\xx  (\kk ) := \left( \frac{\hat{\chi} (\kk ) e^j _{(1)} (\kk ) e^{-i \kk \xx }}{\sqrt{\omega (\kk )}} , \frac{\hat{\chi} (\kk ) e^j _{(2)} (\kk ) e^{-i \kk \xx }}{\sqrt{\omega (\kk )}} \right) \in\C^2.
\end{align}

Forthly, we introduce the interaction Hamiltonian and the total Hamiltonian 
in the Hilbert space of state vectors 
for the coupled system, which is
taken to be 
\begin{align}
\F _\mathrm{DM} := \H _\mathrm{D} \ot \F _\mathrm{rad} .
\end{align}
We remark that this Hilbert space can be identified as 
\begin{align}\label{decomposition}
\F _\mathrm{DM} = L^2 (\R_\xx ^3 ; \op ^4 \F _\mathrm{rad}) = \int ^{\oplus} _{\R ^3} d\xx \, \op ^4\F _\mathrm{rad} ,
\end{align}
the Hilbert space of $ \op ^4 \F _\mathrm{rad} $-valued Lebesgue square integrable functions on $ \R_\xx ^3 $ (the constant fibre direct integral with the base space $ (\R ^3 , d\xx ) $ and fibre $ \op ^4 \F _\mathrm{rad} $). We freely use this identification. 
Now, since the mapping $ \xx \mapsto g_\xx^j $ from $ \R ^3 $ to $ \H _\mathrm{ph} $ is strongly continuous, we can define a decomposable self-adjoint operator $A^j$ by
\begin{align}
A^j := \int ^\oplus _{\R ^3} d\xx \, A^j (\xx ) ,\q j=1,2,3,
\end{align}
acting in $ \int ^\oplus _{\R ^3} d\xx \, \op^4 \F _\mathrm{rad}  $.
We have now arrived at the position to define the minimal interaction
Hamiltonian $H_1$, between the Dirac particle and the quantized radiation field with the UV cutoff $ \chi $. It is given by
\begin{align}
H_1 := -q { \boldsymbol \alpha } \cdot \mathbf{A}=-q \sum_{j=1}^3\alpha^j A^j.
\end{align}
The total Hamiltonian of the coupled system is then given by
\begin{align}
H_\mathrm{DM} (V) &:= H_0  + H_1, \\
 H_0& :=H_\mathrm{D} (V) + H_\mathrm{rad}.
\end{align}
This is called the \textit{Dirac-Maxwell Hamiltonian}. The essential self-adjointness of $ H_\mathrm{DM} (V) $ is discussed in \cite{MR1765584}. However, when the potential $ V $ is of the Coulomb
type \eqref{Cou}, the essential self-adjointness of $ H_\mathrm{DM} (V) $ remains to be proved.

The rest of the present paper is devoted to prove
\begin{Thm}\label{main}
Suppose that the potential $ V $ satisfies Assumption \ref{ass-V}, and that the Fourier transformation
of the UV cut-off, which we denote by $\hat\chi$, is real-valued. Then, $H_\rm{DM}(V)$ is essentially self-adjoint.
\end{Thm}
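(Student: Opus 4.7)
My plan is to apply Theorem~\ref{abs-main} to $H:=H_\rm{DM}(V)$ with $H_0,H_1$ as in the text, using the auxiliary operator
\[
A := I_{\H_\rm{D}}\otimes \mathrm{d}\Gamma_\bb(I_{\H_\rm{ph}}),
\]
the total boson number operator on $\F_\rm{DM}$. Conditions (I)--(II) of Assumption~\ref{ass1} are immediate: $A$ is self-adjoint and non-negative, and it strongly commutes both with $H_\rm{D}(V)\otimes I$ (disjoint tensor factors) and with $I\otimes H_\rm{rad}=I\otimes \mathrm{d}\Gamma_\bb(\omega)$ (second quantizations of commuting operators on $\H_\rm{ph}$). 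For (III), the standard Segal-field bound $\|\phi(F)\Psi\|\le \sqrt{2}\,\|F\|_{\H_\rm{ph}}\,\|(A+1)^{1/2}\Psi\|$ applied to $F=g^j_\xx$ gives the $A^{1/2}$-boundedness of $H_1$, since $\|g^j_\xx\|_{\H_\rm{ph}}^2=2\int|\hat\chi(\kk)|^2/\omega(\kk)\,d\kk$ is independent of $\xx$. Condition (IV) holds with $b=1$, because $\phi(g^j_\xx)$ shifts the boson number by exactly $\pm 1$ and this shift survives the direct-integral assembly $A^j=\int^\oplus d\xx\,A^j(\xx)$. Theorem~\ref{abs-main}(i) then reduces the problem to excluding the alternative that $H_\rm{DM}(V)$ has no self-adjoint extension.

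To rule out this alternative, I would construct an antiunitary operator $\mathcal{J}$ on $\F_\rm{DM}$ commuting with $H_\rm{DM}(V)$; such a $\mathcal{J}$ intertwines the deficiency subspaces $\ker(H_\rm{DM}^*\mp i)$, forces $n_+=n_-$, and so yields a self-adjoint extension. Define
\[
\mathcal{J}:=\mathcal{J}_\rm{D}\otimes \Gamma_\bb(K), \q (\mathcal{J}_\rm{D}\psi)(\xx):=\gamma^5 U\,\psi(-\xx)^*,
\]
where $U$ is the Pauli matrix of the cited lemma, $\gamma^5:=i\gamma^0\gamma^1\gamma^2\gamma^3$, and $\Gamma_\bb(K)$ denotes the antiunitary extension to $\F_\rm{rad}$ of the complex conjugation $K$ on $\H_\rm{ph}$. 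The commutation $\mathcal{J}H_\rm{DM}(V)=H_\rm{DM}(V)\mathcal{J}$ then reduces to the following algebraic checks: (a) $U\overline{\alpha^j}U^{-1}=\alpha^j$ and $[\gamma^5,\alpha^j]=0$ handle the kinetic term $\boldsymbol\alpha\cdot\mathbf{p}$; (b) the double anticommutation $\{U,\beta\}=0$, $\{\gamma^5,\beta\}=0$ leaves $M\beta$ invariant; (c) CP-invariance $U^{-1}V(\xx)U=V(-\xx)^*$ combined with $[\gamma^5,V]=0$---automatic for the scalar Coulomb potential---deals with $V$; (d) $K\omega=\omega K$ gives commutation with $H_\rm{rad}$; and (e) the reality of $\hat\chi$ and of $\ee_{(r)}$ yields the crucial identity $Kg^j_{-\xx}=g^j_\xx$, which, paired with the spatial parity in $\mathcal{J}_\rm{D}$, produces $\mathcal{J}A^j=A^j\mathcal{J}$ fibre-wise in the representation $\F_\rm{DM}=\int^\oplus d\xx\,\op^4\F_\rm{rad}$.

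The main technical obstacle I anticipate is identity (e): one must carefully transport $\mathcal{J}$ through the decomposable form of $A^j$, reconcile the Dirac parity $\xx\to-\xx$ with the Fock-side identity $\Gamma_\bb(K)\phi(F)\Gamma_\bb(K)^{-1}=\phi(KF)$, and verify that the reality hypothesis on $\hat\chi$ is invoked precisely where needed so that $Kg^j_{-\xx}=g^j_\xx$. A secondary point is domain invariance: $\mathcal{J}$ must preserve a dense subspace on which the formal identities close. The subspace $D'=D(H_0)\cap D$ from Theorem~\ref{abs-main} is the natural candidate, since $\Gamma_\bb(K)$ respects the particle-number decomposition that defines $D$, and $\mathcal{J}_\rm{D}$ preserves $D(H_\rm{D}(V))$ thanks to CP-invariance. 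Once the identities close on $D'$, $\mathcal{J}$ extends to an antiunitary operator commuting with $\overline{H_\rm{DM}(V)}$; equal deficiency indices follow by the standard von~Neumann argument, and Theorem~\ref{abs-main}(i) delivers essential self-adjointness.
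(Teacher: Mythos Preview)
Your overall strategy coincides with the paper's: verify Assumption~\ref{ass1} for $H_\rm{DM}(V)=H_0+H_1$ with the photon number operator playing the role of $A$ (this is the paper's Lemma~\ref{time-od-lem}), invoke Theorem~\ref{abs-main}\,(i), and then rule out alternative~(a) by exhibiting a self-adjoint extension. The paper dispatches that last step in one line by citing Arai's earlier result (Lemma~\ref{exi-of-sa-ext}, i.e.\ \cite[Theorem~1.2]{MR1765584}), whereas you reproduce a von~Neumann conjugation argument explicitly; so the substance is the same and your write-up of (I)--(IV) is fine.

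There is, however, a genuine gap in your explicit construction of $\mathcal{J}$. In step~(c) you invoke $[\gamma^5,V]=0$, calling it ``automatic for the scalar Coulomb potential.'' But Theorem~\ref{main} is stated for \emph{any} $4\times 4$ Hermitian matrix-valued $V$ satisfying Assumption~\ref{ass-V}, and nothing there forces $V$ to commute with $\gamma^5$. With $\mathcal{J}_\rm{D}=\gamma^5 UPC$ one indeed gets commutation with $\boldsymbol\alpha\cdot\mathbf{p}$ and $M\beta$, but the potential term yields $\mathcal{J}_\rm{D}V=\gamma^5 V(\cdot)U\,PC$, and passing $\gamma^5$ through $V$ is exactly where you need the extra hypothesis. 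Dropping $\gamma^5$ does not help either: $UPC$ commutes with $\boldsymbol\alpha\cdot\mathbf{p}$ and $V$ but \emph{anti}commutes with $M\beta$, so it fails to be a conjugation for $H_\rm{DM}(V)$. Thus your $\mathcal{J}$ proves the theorem only under the additional restriction $[\gamma^5,V]=0$ (which does cover the Coulomb case), not in the generality claimed. To match the paper's statement you should either cite \cite{MR1765584} for Lemma~\ref{exi-of-sa-ext} as the authors do, or supply a conjugation that works under Assumption~\ref{ass-V} alone.
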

\noindent We emphasize here again 
that Theorem \ref{main} certainly covers the Coulomb potential case \eqref{Cou}, if $Zq^2<1/2$.

\section{Proof of Theorem \ref{main}}
First, we recall the important result obtained in Ref. \cite{MR1765584}.
\begin{Lem}\label{exi-of-sa-ext}
Suppose that Assumption \ref{ass-V} is valid and $\hat\chi$ is real-valued. Then,
$H_\rm{DM}(V)$ has a self-adjoint extension.
\end{Lem}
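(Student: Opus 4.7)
The plan is to apply a variant of von Neumann's theorem on self-adjoint extensions: a densely defined symmetric operator which commutes with some anti-unitary operator on the ambient Hilbert space has equal deficiency indices and hence admits a self-adjoint extension. The task thus reduces to constructing such a conjugation $J$ on $\F_\rm{DM} = \H_\rm{D}\otimes \F_\rm{rad}$ which leaves $H_\rm{DM}(V)$ invariant on a dense, $J$-stable domain.

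I would take $J$ as a tensor product $J := J_\rm{D}\otimes J_\rm{F}$ built out of symmetries of the two sectors. On the Dirac side, $J_\rm{D}$ is an anti-unitary operator manufactured from the parity operator $Pf(\xx) = f(-\xx)$, complex conjugation $C$, and the matrix $U$ supplied by Pauli's lemma, possibly composed with a chirality factor $\gamma^5 := i\gamma^0\gamma^1\gamma^2\gamma^3$ to compensate the sign in $U^{-1}\beta U = -\overline{\beta}$. Commutativity of $J_\rm{D}$ with the free Dirac operator $\boldsymbol{\alpha}\cdot\mathbf{p} + M\beta$ follows from the identities $U^{-1}\alpha^j U = \overline{\alpha^j}$ and $U^{-1}\beta U = -\overline{\beta}$, together with the sign flips in $\mathbf{p} = -i\nabla$ induced by $P$ and $C$. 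Commutativity with the potential $V$ is granted by the CP invariance $U^{-1}V(\xx)U = V(-\xx)^*$. On the photon side, define $J_1 F(\kk) := F(-\kk)^*$ on $\H_\rm{ph}$; this is anti-unitary and involutive, and $\omega(-\kk)=\omega(\kk)$ makes it commute with the one-photon energy $\omega$. Its second quantization $J_\rm{F} := \Gamma(J_1)$ is then an anti-unitary operator on $\F_\rm{rad}$ commuting with $H_\rm{rad} = \mathrm{d}\Gamma_\bb(\omega)$.

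The main obstacle is to verify that $J$ commutes with the interaction $H_1 = -q\sum_j \alpha^j A^j$. Using the transformation law $J_\rm{F}\phi(F)J_\rm{F}^{-1} = \phi(J_1 F)$ for the Segal field, this reduces to computing $J_1 g^j_\xx$ and matching it with the Dirac-sector action of $J_\rm{D}$. Three ingredients must conspire: (i) the hypothesis that $\hat\chi$ is real-valued, which prevents the complex conjugation in $J_1$ from producing an unwanted phase on the UV cutoff; (ii) the equality $\omega(-\kk)=\omega(\kk)$, ensuring invariance of the factor $1/\sqrt{\omega(\kk)}$; and (iii) an appropriate choice of polarization vectors, arranged so that $\ee_{(r)}(-\kk)$ is related to $\ee_{(r)}(\kk)$ in a definite manner. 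Combining these yields a relation of the form $J_\rm{F} A^j(\xx) J_\rm{F}^{-1} = \pm A^j(-\xx)$, whose sign is compensated precisely by the action of $J_\rm{D}$ on $\alpha^j$ together with the parity $\xx\to -\xx$ operating on the fibre direct integral \eqref{decomposition}. Once this bookkeeping is complete, $J H_1 J^{-1} = H_1$ on the natural domain, so $J$ commutes with $H_\rm{DM}(V)$, and the desired self-adjoint extension follows from von Neumann's criterion.
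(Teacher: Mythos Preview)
Your approach is correct and is precisely the strategy of Arai \cite{MR1765584}, Theorem 1.2, which the paper simply cites without reproducing the argument. The construction of an anti-unitary conjugation $J = J_\rm{D}\otimes J_\rm{F}$ exploiting CP invariance of $V$, the reality of $\hat{\chi}$, and a suitable parity convention for the polarization vectors is exactly the content of that reference, so your sketch fills in what the paper deliberately omits.
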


\begin{proof}
See Ref. \cite{MR1765584}, Theorem 1.2.
\end{proof}

Let $N_\bb$ be the photon number operator which is defined by
\begin{align}
N_\bb := 1\ot \mathrm{d}\Gamma_\bb(1), 
\end{align}
acting in $ \H _\rm{D} \otimes \F _\rm{rad} $.
Note that the operator $ N_\bb $ can be identified with the following decomposable operator 
in the sense of \eqref{decomposition}:
\begin{align}
N_\bb = \int _{\R ^3} ^\oplus d\xx \, \mathrm{d} \Gamma _\bb (1) .
\end{align}

\begin{Lem}\label{time-od-lem}
The Dirac Maxwell Hamiltonian $H_\rm{DM}(V)$ fulfills Assumption \ref{ass1},
where the photon number operator $N_\bb$ plays a role of $A$ in it. Namely,   
\begin{enumerate}[(i)]
\item $N_\bb$ is self-adjoint and non-negative.
\item $N_\bb$ and $H_0$ are strongly commuting.
\item $H_1$ is $N_\bb^{1/2}$-bounded.
\item If $\Psi\in E_{N_\bb}([0,L])$, then $H_1\Psi \in E_{N_\bb}([0,L+1])$.
\end{enumerate}
\end{Lem}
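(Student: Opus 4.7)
The plan is to verify the four conditions in turn, noting that (i) and (ii) are essentially structural consequences of the tensor-product and second-quantization formalism. For (i), $N_\bb=1\ot d\Gamma_\bb(1)$ is the tensor product of the identity on $\H_\rm{D}$ with the free Boson number operator on $\F_\rm{rad}$; the latter is self-adjoint with spectrum $\{0,1,2,\ldots\}$, whence $N_\bb$ is self-adjoint and non-negative. For (ii), $H_0=H_\rm{D}(V)\ot 1 + 1\ot H_\rm{rad}$: the first summand commutes with $N_\bb$ trivially since they act on disjoint tensor factors, and the second strongly commutes with $N_\bb$ because the multiplication operators $\omega$ and $1$ on $\H_\rm{ph}$ strongly commute, and second quantization preserves strong commutativity.

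For (iii), the plan is to exploit the direct-integral decomposition $A^j=\int^\oplus_{\R^3}d\xx\, \phi(g^j_\xx)$ together with the standard Segal bound
\beq
\|\phi(f)\Psi\|\le \sqrt{2}\|f\|_{\H_\rm{ph}}\|(d\Gamma_\bb(1)+1)^{1/2}\Psi\|,\q f\in\H_\rm{ph},
\eeq
valid on $D(d\Gamma_\bb(1)^{1/2})$. A direct computation gives
\beq
\|g^j_\xx\|_{\H_\rm{ph}}^2 = \sum_{r=1,2}\int_{\R^3}\frac{|\hat{\chi}(\kk)|^2|e^j_{(r)}(\kk)|^2}{\omega(\kk)}\,d\kk,
\eeq
which is independent of $\xx$ because the factor $e^{-i\kk\xx}$ is unimodular, and is bounded by $2\|\hat{\chi}/\sqrt{\omega}\|^2<\infty$ since $|e^j_{(r)}(\kk)|\le 1$. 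Since $N_\bb$ is a constant-fibre decomposable operator, its fractional powers are decomposable fibrewise as well, so the fibrewise bound lifts to $\|A^j\Psi\|\le C\|(N_\bb+1)^{1/2}\Psi\|$. Multiplying by the bounded matrices $\alpha^j$ and summing over $j$ then yields the $N_\bb^{1/2}$-bound for $H_1$.

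For (iv), the plan is to use the standard photon-number-raising property of the Segal operator. The annihilation $a^{(r)}(f)$ and creation $a^{(r)}(f)^*$ respectively lower and raise the photon number by one, so since $d\Gamma_\bb(1)$ has integer spectrum, $R(E_{d\Gamma_\bb(1)}([0,L]))=\op_{n=0}^{\lfloor L\rfloor}\ot^n_\rm{s}\H_\rm{ph}$ and $\phi(g^j_\xx)$ maps this subspace into $\op_{n=0}^{\lfloor L\rfloor+1}\ot^n_\rm{s}\H_\rm{ph}\subset R(E_{d\Gamma_\bb(1)}([0,L+1]))$. Using the constant-fibre decomposability $E_{N_\bb}(B)=\int^\oplus_{\R^3}d\xx\,E_{d\Gamma_\bb(1)}(B)$, this fibrewise statement lifts to $A^j$, and since the matrices $\alpha^j$ act only on $\H_\rm{D}$ they commute with $N_\bb$. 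Hence $H_1$ maps $R(E_{N_\bb}([0,L]))$ into $R(E_{N_\bb}([0,L+1]))$, and (iv) holds with $b=1$.

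The only mildly technical point is the passage from the fibrewise statements for $\phi(g^j_\xx)$ to the corresponding assertions about the direct-integral operator $A^j$; this is handled by the general theory of constant-fibre decomposable operators, which guarantees that spectral projections, fractional powers, and operator bounds all transfer transparently between a single fibre and the total space $\F_\rm{DM}$.
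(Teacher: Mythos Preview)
Your proof is correct and follows essentially the same approach as the paper: (i) and (ii) are treated as standard structural facts, (iii) is obtained from the usual field-operator bound in terms of the number operator together with the $\xx$-independence of $\|g^j_\xx\|$, lifted fibrewise through the direct-integral decomposition, and (iv) uses that $\phi(g^j_\xx)$ changes photon number by at most one. The only cosmetic difference is that the paper derives the bound in (iii) from separate estimates on $a$ and $a^*$ rather than quoting the combined Segal bound $\|\phi(f)\Psi\|\le\sqrt{2}\|f\|\,\|(d\Gamma_\bb(1)+1)^{1/2}\Psi\|$, which amounts to the same thing.
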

\begin{proof}
The assertion (i) and (ii) are well known. 

We prove (iii). It is well known \cite{AraiFock} that for all $\Psi\in D(\mathrm{d}\Gamma_\bb(1)^{1/2})$ and $F\in\H_\rm{ph}$,
$\Psi$ belongs to $D(a(F)\cap D(a(F)^*)$ and the estimates
\begin{align}
|| a(F)\Psi || \le \| F \|\, ||\mathrm{d}\Gamma_\bb(1) ^{1/2}\Psi || ,\q ||a(F)^*\Psi || \le ||F|| \,||\mathrm{d}\Gamma_\bb(1) ^{1/2} \Psi || + ||F||\,||\Psi||
\end{align}
are valid. Therefore, we find that, for all $\Psi\in D(\mathrm{d}\Gamma_\bb(1)^{1/2})$, $\xx\in\R^3$ and $j=1,2,3$,
$\Psi$ is in $D(A^j(\xx))$ and that
\begin{align}
|| A^j (\xx) \Psi || \le \sqrt{2} ||g_\xx^j || \,|| \mathrm{d}\Gamma_\bb(1) ^{1/2} \Psi || +\frac{1}{\sqrt{2}} ||g_\xx^j|| \,||\Psi ||. 
\end{align}
Take arbitrary $\Psi\in D(N_\bb^{1/2})$ as a subspace of $\F_\rm{DM}$. Then, for almost
every $\xx\in\Real^3$ and $j=1,2,3$, we have $\Psi(\xx) \in D(A^j(\xx))$ and
\begin{align}
|| A^j (\xx) \Psi (\xx)|| \le \sqrt{2} ||g_\xx^j || \,|| \mathrm{d}\Gamma_\bb(1)^{1/2}\Psi(\xx) || +\frac{1}{\sqrt{2}} ||g_\xx^j|| \,||\Psi(\xx) ||. 
\end{align}
From the elementary inequality
\[ (a+b)^2 \le 2a^2+2b^2 ,\]
and the fact that $||g_\xx^j || = ||g_\mathbf{0}^j ||$,
it follows that
\begin{align}\label{at-x}
|| A^j (\xx) \Psi (\xx)||^2 \le 4 ||g_\mathbf{0}^j ||^2 \,|| \mathrm{d}\Gamma_\bb(1) ^{1/2} \Psi(\xx) ||^2 + ||g_\mathbf{0}^j||^2 \,||\Psi(\xx) ||^2. 
\end{align}
By integrating both sides of \eqref{at-x} with respect to $\xx$ on $\R^3$, one obtains
$\Psi\in D(A^j)$ and
\begin{align}
|| A^j \Psi ||^2 &= \int_{\R^3} d\xx \,|| A^j (\xx) \Psi (\xx)||^2 \no\\
	&\le 4 ||g_\mathbf{0}^j ||^2 \,\int_{\R^3} d\xx \,|| \mathrm{d}\Gamma_\bb(1) ^{1/2}\Psi(\xx) ||^2 
	+ ||g_\mathbf{0}^j||^2 \,\int_{\R^3} d\xx \,||\Psi(\xx) ||^2 \no \\
	&=4 || g_\mathbf{0}^j ||^2 \,|| N_\bb^{1/2}\Psi ||^2 +  || g_\mathbf{0}^j ||^2 \,||\Psi ||^2.
\end{align}
Hence, we obtain
\begin{align}\label{est-Aj}
|| A^j \Psi || \le2 || g_\mathbf{0}^j || \,|| N_\bb^{1/2}\Psi || +  || g_\mathbf{0}^j ||\,||\Psi ||.
\end{align}
Thus, one obtains from \eqref{est-Aj} the estimate for all $\Psi\in D(N_\bb^{1/2})$,
\begin{align}
|| H_1 \Psi || &= |q| \sum_{j=1}^3 || \alpha^j A^j \Psi ||\no\\
	&\le |q| \sum_{j=1}^3 || \alpha^j || \,|| A^j \Psi ||\no\\
	&\le 2|q| \sum_{j=1}^3 || \alpha^j ||\,|| g_\mathbf{0}^j || \,|| N_\bb^{1/2}\Psi || 
	+ |q| \sum_{j=1}^3 || \alpha^j ||\, || g_\mathbf{0}^j ||\,||\Psi ||,
\end{align}
which proves (iii).

We prove (iv). Let us introduce the closed subspace $\F_N \subset \F _\rm{DM}$ for $N\in\Natural\cup\{0\}$ by
\begin{align}
\F_N:= \H_\rm{D}\ot\left(\op_{n=0}^N \ot_{s}^n \H_\rm{ph}\right) .
\end{align}
As is well known, the spectrum of $N_\bb$ is equal to the discrete set $\Natural\cup\{0\}$,
and the eigenspace belonging to an eigenvalue $n\in\Natural\cup\{0\}$ is 
$\H_\rm{D}\ot\left(\ot_{s}^n \H_\rm{ph}\right)$. Hence, for $L\ge 0$,
$E_{N_\bb}([0,L])$ is the orthogonal projection onto $\mathcal{F}_{[L]}$ 
with $[L]$ denoting the integer satisfying $L-1< [L]\le L$, and we have
\begin{align}
R(E_{N_\bb}([0,L])) = \mathcal{F}_{[L]}.
\end{align}
Since the interaction Hamiltonian $H_1$ creates at most one photon,
$H_1$ maps $\F_N$ into $\F_{N+1}$, and
we find that $\Psi \in R(E_{N_\bb}([0,L]))$ implies $\Psi\in R(E_{N_\bb}([0,[L]+1]))=R(E_{N_\bb}([0,L+1]))$.
This proves (iv).
\end{proof}

\begin{proof}[Proof of Theorem \ref{main}]
From Lemma \ref{time-od-lem} and Theorem \ref{abs-main} (\ref{main1}),
we find that if there exists at least one self-adjoint extension of $H_\rm{DM}(V)$,
then $H_\rm{DM}(V)$ is essentially self-adjoint. But from Lemma \ref{exi-of-sa-ext}, 
$H_\rm{DM}(V)$ indeed has a self-adjoint extension. This completes the proof.
\end{proof}

We remark that our proof presented here is also applicable to similar 
particle-field Hamiltonians. For instance, we can prove that the Dirac-Klein-Gordon Hamiltonian $H_\mathrm{DKG}(V)$
acting in the Hilbert space $L^2(\R^3;\C^4)\otimes \mathcal{F}_\bb(L^2(\Real^3))$, given by 
\begin{align}
H_\mathrm{DKG}(V)=H_\mathrm{D}(V)+1\otimes d\Gamma_\bb(\omega) 
+\lambda \int^{\oplus}_{\R^3}d\xx\,\beta\, \phi(\chi_j(\xx)),
\end{align}
with
\begin{align}
\chi_j(\xx)(\kk):=\frac{\hat{\chi}(\kk)}{\sqrt{\omega(\kk)}}e^{-i\kk\xx}, \quad \chi\in\ L^2(\Real)
\end{align}
and $\lambda\in\Real$, is essentially self-adjoint as long as the above assumptions 
are satisfied. This Hamiltonian describes a quantum system of a Dirac particle under the potential $V$
 interacting with a neutral scalar field.
\section*{Acknowledgement}
The authors are grateful to Professor Asao Arai for valuable comments and discussions.
They also thank Assistant Professor Toshimitu Takaesu for his comments.

\bibliographystyle{plain}

\bibliography{myref-1}

\begin{thebibliography}{10}

\bibitem{AraiFock}
Asao Arai.
\newblock {\em Fock Spaces and Quantum Fields I, II (in Japanese)}.
\newblock Nippon-Hyoronsha, Tokyo, 2000.

\bibitem{MR1765584}
Asao Arai.
\newblock A particle-field {H}amiltonian in relativistic quantum
  electrodynamics.
\newblock {\em J. Math. Phys.}, 41(7):4271--4283, 2000.

\bibitem{MR1981623}
Asao Arai.
\newblock Non-relativistic limit of a {D}irac-{M}axwell operator in
  relativistic quantum electrodynamics.
\newblock {\em Rev. Math. Phys.}, 15(3):245--270, 2003.

\bibitem{MR2260374}
Asao Arai.
\newblock Non-relativistic limit of a {D}irac polaron in relativistic quantum
  electrodynamics.
\newblock {\em Lett. Math. Phys.}, 77(3):283--290, 2006.

\bibitem{MR2810826}
Asao Arai.
\newblock Heisenberg operators of a {D}irac particle interacting with the
  quantum radiation field.
\newblock {\em J. Math. Anal. Appl.}, 382(2):714--730, 2011.

\bibitem{FutakuchiUsui2013}
Shinichiro Futakuchi and Kouta Usui.
\newblock {Construction of dynamics and time-ordered exponential for unbounded
  non-symmetric Hamiltonians}.
\newblock 2013, To appear.

\bibitem{Nishijima1973a}
Kazuhiko Nishijima.
\newblock {\em Relativistic Quantum Mechanics (in Japanese)}.
\newblock Baihu-kan, 1973.

\bibitem{MR751959}
Michael Reed and Barry Simon.
\newblock {\em Methods of Modern Mathematical Physics. {I}. Functional
  Analysis}.
\newblock Academic Press Inc. [Harcourt Brace Jovanovich Publishers], New York,
  second edition, 1980.

\bibitem{MR2178588}
Itaru Sasaki.
\newblock Ground state energy of the polaron in the relativistic quantum
  electrodynamnics.
\newblock {\em J. Math. Phys.}, 46(10):102307, 6, 2005.

\bibitem{MR2953553}
Konrad Schm{{\"u}}dgen.
\newblock {\em Unbounded Self-adjoint Operators on {H}ilbert space}, volume 265
  of {\em Graduate Texts in Mathematics}.
\newblock Springer, Dordrecht, 2012.

\bibitem{MR2377946}
Edgardo Stockmeyer and Heribert Zenk.
\newblock Dirac operators coupled to the quantized radiation field: essential
  self-adjointness \`a la {C}hernoff.
\newblock {\em Lett. Math. Phys.}, 83(1):59--68, 2008.

\bibitem{MR1219537}
Bernd Thaller.
\newblock {\em The {D}irac Equation}.
\newblock Texts and Monographs in Physics. Springer-Verlag, Berlin, 1992.

\end{thebibliography}

\end{document}